\renewcommand\appendix{\par
  \setcounter{section}{0}
  \setcounter{subsection}{0}
  \setcounter{figure}{0}
  \setcounter{table}{0}
  \renewcommand\thesection{Appendix \Alph{section}}
  \renewcommand\thefigure{\Alph{section}\arabic{figure}}
  \renewcommand\thetable{\Alph{section}\arabic{table}}
}
\newtheorem{thm}{Theorem}[section]
\newtheorem{lem}[thm]{Lemma}
\theoremstyle{definition}
\newtheorem{defn}{Definition}[section]
\newtheorem{exmp}{Example}[section]
\newcommand\blfootnote[1]{%
  \begingroup
  \renewcommand\thefootnote{}\footnote{#1}%
  \addtocounter{footnote}{-1}%
  \endgroup
}
\title{Group Symmetries of Complementary Code Matrices}
\author[1]{Brooke Logan\thanks{brookelogan974@gmail.com}\thanks{Rowan University's Mathematics Department funded the research during Summer 2013.  Bantivolglio Honors Concentration funded the research during Spring 2014. }}
\author[1]{Hieu D. Nguyen\thanks{nguyen@rowan.edu}}
\affil[1]{Department of Mathematics, Rowan University}
\date{May 29, 2015}
\begin{document}
\maketitle
\renewcommand\Authands{ and }
\blfootnote{94A05: Communication theory}
\begin{abstract}
We characterize group symmetries of poly-phase complementary code matrices (CCMs), which we use to classify CCMs in terms of their equivalence classes.  We also present classification results for CCMs of dimension $N\times 4$ where $N=2,3,4,5,6$. Finally, we present a new construction to generate quad-phase CCMs from ternary CCMs and compare this to other existing constructions that focus on generating CCMs from those of smaller dimensions. 
\end{abstract}

\section{Introduction}
In phase-coded pulse-compression radar, signal waveforms can be represented by a string of values (called a code) consisting of roots of unity. In order to correlate incoming with outgoing signals these codes should have nice sharp peak-sidelobe characteristics in terms of their auto-correlation functions. Usually in radar and communications, Barker codes and Golay pairs of codes are employed, but there is increased interest in more general codes called complementary code sets, i.e., complementary sets of codes whose composite auto-correlation function has zero sidelobe levels (\cite{ref1,ref4}).  When expressed in matrix form, complementary code sets are referred to as complementary code matrices (CCMs). 

In this paper we characterize known symmetries of CCMs in terms of their relations.  Previously, Golomb and Win \cite{ref7} investigated symmetries of a single polyphase sequence.  These symmetries were later extended to CCMs by Coxson and Haloupek \cite{ref1}.  Let $M$ be a CCM.  Here are the five known symmetries of $M$ that preserves its CCM-property: 
\begin{enumerate}
\item[(i)] Column multiplication by a unimodular complex number.
\item[(ii)] Column conjugate reversal.
\item[(iii)] Matrix conjugation.
\item[(iv)] Progressive multiplication by consecutive powers of a unimodular complex number.
\item[(v)] Column permutation.
\end{enumerate}
When viewed as group generators these five symmetries are non-commutative in general, e.g., column multiplication does not commute with matrix conjugation.  Therefore, it is important to characterize their relations, which we use to classify CCM's in terms of their equivalence classes.

Our results describing the group relations between the five symmetries above extend those of Coxson \cite{ref6} who determined the group structure for symmetries of Barker codes.
Moreover, we obtain an upper bound on the corresponding group generated by these symmetries, which we call the complementary group $G$.  For $p$-phase $N\times K$ CCMs, we establish in Section \ref{sec:3} (Theorem \ref{th:bound-complementary-group}) that
\[
|G|\leq 2^{K+1}p^{K+1}K!.
\]

In Section \ref{sec:4}, we extend Coxson and Russo's \cite{ref2} efficient exhaustive search algorithm for binary CCMs to $p$-phase CCMs.  This algorithm was implemented for quad-phase $N\times 4$ CCMs in Section \ref{sec:5} to obtain a classification of all equivalence classes for $N=2,3,4,5,6$ (see Table \ref{table:equivalence-classes} for a description of the number of equivalence classes and those that are represented by Hadamard matrices). This extends Gibson's \cite{ref3} classification results for quaternary Golay sequence pairs. In the same section, we present a method to construct quad-phase CCMs from dual-pairs of ternary CCMs (dual in the sense that their commutator is diagonally regular).  Lastly, we present in Table \ref{table:construction-methods} a list of the number of equivalence classes whose representatives are dual-pairs.

\section{Preliminaries}
In this section we present definitions of complementary code matrices based on the auto and row-correlation functions. Let $M$ be a $N \times K$ $p$-phase matrix, i.e., one whose entries consists of $p$-th roots of unity. At times we will represent $M$ in several different ways: coordinate-wise with $M=[m_{n,k}]$ $(1 \leq n \leq N,\text{ } 1 \leq k \leq K)$, column-wise with $M= [x_{1}, x_{2}, x_{3},\ldots,x_{K}]$, or row-wise with $M= [r_{1}, r_{2}, r_{3},\ldots,r_{N}]^T$.

  \begin{defn} The aperiodic {\bf autocorrelation function} of a code $x=(a_1,\ldots,a_N)$ is defined to be
  \begin{equation}  \label{eq:acf}
  \mathrm{A}_{x}(j)= 
  \begin{cases}
  \sum\limits_{i=1}^{N-j} a_{i}\bar{a}_{i+j}, & \text{if } 0\leq j \leq N-1; \\
  \\
  \overline{\mathrm{A}_{x}(-j)}, & \text{if } -N+1\leq j <0. \\
\end{cases}
\end{equation}
  \end{defn}
  \noindent Note that $\mathrm{A}_x(0)=|x|^2=N$.  Also, a desirable autocorrelation function should have a sharp center peak ($j=0$) and low sidelobes ($j\neq 0$).

  \begin{exmp}
  Let $x=\{-1,-1,-i\}$.  Then
  \begin{align*}
  \mathrm{A}_{x}(-2) & = -1(i)=-i\\
  \mathrm{A}_{x}(-1) & = -1(-1)-1(i)=1-i \\
  \mathrm{A}_{x}(0) & = -1(-1)-1(-1)-i(-i)=3 \\
  \mathrm{A}_{x}(1) & = -1(-1)-1(-i)=1+i \\
  \mathrm{A}_{x}(2) & = -1(-i)=i.
  \end{align*}
  \end{exmp}

\begin{exmp}
An example of a Barker Sequence, where the $\mid A_x(j)\mid \leq 1$ for all $j\neq 0$, is the following:
\[
x=\{1, 1, 1, 1, 1, -1, -1, 1, 1, -1, 1, -1, 1\}.
\]
The autocorrelation of $x$ is given by
\[
\mathrm{A}_x=\{1, 0, 1, 0, 1, 0, 1, 0, 1, 0, 1, 0, 13, 0, 1, 0, 1, 0, 1, 0, 1, 0, 1, 0, 1\}.
\]
One can observe the peak and sidelobes of this autocorrelation in Figure \ref{fig1}.
\begin{figure}[ht]
\begin{center}
\includegraphics[scale=.7]{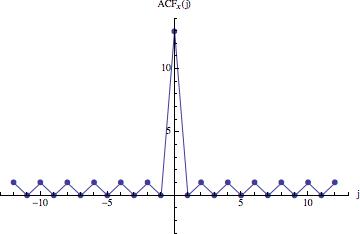}
\caption{Autocorrelation of Barker sequence $x=\{1, 1, 1, 1, 1, -1, -1, 1, 1, -1, 1, -1, 1\}$.}
\label{fig1}
\end{center}
 \end{figure}
\end{exmp}

The goal in radar coding theory is to construct code with zero sidelobes; however, in the case of Barker codes, this is impossible since the calculation of the last sidelobe, $|A_x(N-1)|=|a_1\bar{a_n}|=1$, shows that $A_x$ must always equal 1.  On the other hand, if we employ a set of two complementary codes, i.e., a Golay pair, then it is possible to obtain zero sidelobes.  This is illustrated in the next example.

\begin{exmp}
Consider two codes of length 4, $x_1=\{1,1,-1,1\}$ and $x_2=\{1,1,1,-1\}$. Their autocorrelations are given by $A_{x_1}=\{1,0,-1,4,-1,0,1\}$ and $A_{x_2}=\{-1,0,1,4,1,0,-1\}$.  Then $(x_1,x_2)$ forms a Golay pair since the sum of their autocorrelations has zero sidelobes: 
\[
A_{x_1}+A_{x_2}=\{0,0,0,4,0,0,0\}
\]
\end{exmp}

The notion of a Golary pair can be generalized to a set containing an arbitrary number of complementary codes (see \cite{ref4}).

\begin{defn}
A set of $K$ codes $\{x_{1},\ldots,x_{K}\}$, each of length $N$, is called a \textbf{complementary code set} if
\begin{equation} \label{eq:composite-autocorrelation}
\sum\limits_{i=1}^{K}\mathrm{A}_{x_{i}}(j)=NK\delta_j
\end{equation}
for $-N+1\leq j \leq N-1$ and where $\delta_{j}$ is the Kronecker delta function defined by
\begin{equation*}
\delta_{j}= \begin{cases}
0, & \text{if }j \neq 0; \\
1, & \text{if }j =0.
\end{cases}
\end{equation*}

\end{defn}

Next, we review Coxson and Haloupek's \cite{ref1} matrix formulation of complementary code sets.

\begin{defn}
An $N\times K$ matrix $M$ is a \textbf{complementary code matrix} (CCM) if its columns, $x_{1}, \ldots, x_{N}$, form a complementary code set.  In that case, we define the {\bf composite autocorrelation function} of $M$ by the following
\begin{equation}
\mathrm{A}_{M}(j)=\sum\limits_{i=1}^{K}\mathrm{A}_{x_{i}}(j)
\end{equation}
so that
\begin{equation}
\mathrm{A}_M(j)=NK\delta_j
\end{equation}
because of (\ref{eq:composite-autocorrelation}).
\end{defn}

\begin{defn}
Let $M$ be a matrix.  Then $M\cdot M^*$, where $M^*$ represents the conjugate transpose of $M$, is defined to be the \textbf{row Gramian} of $M$.
\end{defn}
\begin{exmp}
\label{rowgramian}
Let $M=\left(\begin{array}{cccc}1&1&1&1\\1&-1&-1&-1\\1&1&-1&-1\end{array}\right)$.  Then the row Gramian of $M$ is given by
\begin{equation*}
M\cdot M^* = \left(\begin{array}{cccc}1&1&1&1\\1&-1&-1&-1\\1&1&-1&-1\end{array}\right) \cdot \left(\begin{array}{ccc}1&1&1\\1&-1&1\\1&-1&-1\\1&-1&-1\end{array}\right)=\left(\begin{array}{ccc}4&-2&0\\-2&4&2\\0&2&4\end{array}\right)
\end{equation*}
\end{exmp}

\begin{defn}A $N\times N$ matrix $Q$ is \textbf{diagonally regular} if its off-diagonals, i.e., those diagonals outside of the main diagonal, sum to zero. 
\end{defn}

Note that Example 2.3 gives an example of a matrix $M$ whose row Gramian is diagonally regular.  Moreover, $M$ is an example of a complementary code set.  This not a coincidence.

\begin{lem}[Coxson-Haloupek \cite{ref2}]
An $N\times K$ matrix $M$ is a CCM if its row Gramian is diagonally regular, i.e.,
\begin{align*}
M\cdot M^* = Q,
\end{align*}
where Q is diagonally regular.
\end{lem}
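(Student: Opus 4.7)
The plan is to expand both sides of the claimed equivalence as explicit sums over the entries $m_{n,k}$ and check that they coincide. Write $M=[m_{n,k}]$, so that the $k$-th column is $x_k = (m_{1,k},\ldots,m_{N,k})$ and the $(n_1,n_2)$-entry of the row Gramian is
\[
(MM^{*})_{n_1,n_2} = \sum_{k=1}^{K} m_{n_1,k}\,\overline{m_{n_2,k}}.
\]
Since the matrix is $p$-phase, $|m_{n,k}|=1$, so the diagonal entries of $MM^{*}$ are all equal to $K$ and hence $\operatorname{tr}(MM^{*}) = NK$.

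Next, I would carry out the main computation: for $0 \leq j \leq N-1$, swap the order of summation to get
\[
A_M(j) = \sum_{k=1}^{K} A_{x_k}(j) = \sum_{k=1}^{K}\sum_{i=1}^{N-j} m_{i,k}\,\overline{m_{i+j,k}} = \sum_{i=1}^{N-j} (MM^{*})_{i,\,i+j},
\]
so $A_M(j)$ is exactly the sum of the entries along the $j$-th super-diagonal of the row Gramian. For $j<0$, the conjugate-symmetry built into Definition 2.1 matches the Hermitian symmetry of $MM^{*}$, giving the corresponding sub-diagonal sum, so it suffices to treat $j\geq 0$.

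With this identification in hand, the equivalence is immediate: at $j=0$ we always have $A_M(0) = \operatorname{tr}(MM^{*}) = NK$, so the CCM condition $A_M(j) = NK\delta_j$ reduces to the requirement that the sum of entries along every off-diagonal (every $j\neq 0$) of $MM^{*}$ vanishes, which is precisely diagonal regularity. Conversely, if $MM^{*}$ is diagonally regular then each $A_M(j) = 0$ for $j\neq 0$ while $A_M(0) = NK$, so $M$ is a CCM.

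There is really no serious obstacle here; the proof is a bookkeeping exercise. The one subtlety worth flagging is to make sure the orientation convention on $M$ and $M^{*}$ is consistent with the way the autocorrelation is indexed (i.e., that the sum $\sum_i m_{i,k}\overline{m_{i+j,k}}$ in $A_{x_k}(j)$ really does land on the $j$-th super-diagonal rather than sub-diagonal of $MM^{*}$), and to use the conjugate symmetry $A_x(-j)=\overline{A_x(j)}$ to cover negative lags. Once that is pinned down, the interchange of summations and the comparison of $\operatorname{tr}(MM^{*})$ with $NK$ complete the proof.
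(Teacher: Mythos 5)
Your proof is correct and takes essentially the route the paper itself relies on: the paper states this lemma without proof (deferring to Coxson--Haloupek), but its remark following the definition of the row-correlation function --- that $\mathrm{R}_M(j)$ is simultaneously the sum of the entries on the $j$-th diagonal of $M M^*$ and the sum $\sum_{k} \mathrm{A}_{x_k}(j)$ of the column autocorrelations --- is exactly the interchange-of-summation identity at the heart of your argument. Your treatment of negative lags via the Hermitian symmetry of $MM^*$ and of $j=0$ via $\operatorname{tr}(MM^*)=NK$ (using unimodularity of the entries) supplies precisely the details the paper leaves implicit, so there is nothing to correct.
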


Next, we present the row-correlation function, which gives an equivalent representation of a CCM in terms of its rows \cite{ref2,ref5}.

\begin{defn} The $\textbf{row-correlation function}$ of a matrix $M$ consisting of rows $\{r_1,\ldots,r_K\}$ is defined by
\begin{align*}
\mathrm{R}_{M}(j)=\sum\limits_{i=1}^{N-j}r_{i} \cdot \bar{r}_{i+j}
\end{align*}
for $0 \leq j \leq N-1$ and $\mathrm{R}_{M}(j)=\overline{\mathrm{R}_{M}(-j)}$ for $-N+1 \leq j <0$.
\end{defn}
Note that $R_{M}(j)$ represents the sum of the elements in the $j$-th diagonal of $M\cdot M^*$, where $j=0$ corresponds to the main diagonal and $j>0$ (or $j<0$) corresponds to $j$-th diagonal above (or below) the main diagonal, respectively.  Moreover, these row-correlation functions can also be represented as the sum of the autocorrelation function of each column in $M$.
\begin{exmp}
The row-correlation function of $M$ given in Example \ref{rowgramian} is given by the following.
\begin{align*}
\mathrm{R}_{M}(0) & = r_{1}\cdot \bar{r}_{1}+r_{2}\cdot \bar{r}_{2}+r_{3}\cdot \bar{r}_{3} \\
& = [-1,-1,-i]\cdot  [-1,-1,i]+[-i,-i,1]\cdot  [i,i,1]+[-1,-1,1]\cdot  [-1,-1,1]= [3,3,3] \\
\mathrm{R}_{M}(1) & = r_{1}\cdot \bar{r}_{2}+r_{2}\cdot \bar{r}_{3}= [-1,-1,-i]\cdot  [i,i,1]+[-i,-i,1]\cdot  [-1,-1,1]= [-3i,1+2i] \\
\mathrm{R}_{M}(2) & = r_{1}\cdot \bar{r}_{3}= [-1,-1,-i]\cdot  [-1,-1,1]= [2-i]\\
\mathrm{R}_{M}(-1) & = \overline{\mathrm{R}_{M}(1)} = [3i, 1-2i]\\
\mathrm{R}_{M}(-2) & = \overline{\mathrm{R}_{M}(2)} = [2+i]
\end{align*}
\end{exmp}

The following theorem characterizes CCMs in terms of the row-correlation function.
\begin{thm}[\cite{ref2,ref5}]
A $N\times K$ matrix $M$ is a CCM if and only if
\begin{equation*}
\mathrm{R}_{M}(j)= \sum_{i=1}^{K}A_{x_{i}}(j)=NK \delta_{j}
\end{equation*}
for $-N+1 \leq j \leq N-1$.
\end{thm}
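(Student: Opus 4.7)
The plan is to split the biconditional into (a) establishing the algebraic identity $R_M(j)=\sum_{k=1}^{K}A_{x_k}(j)$ for every $j$ with $|j|\le N-1$, and then (b) invoking the definition of a CCM, which states exactly that $\sum_{k=1}^{K}A_{x_k}(j)=NK\delta_j$. Once (a) is in hand, (b) is tautological, so the real content is the identity in (a).

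To prove the identity, I would write $M=[m_{n,k}]$ and expand both sides coordinate-wise. For $0\le j\le N-1$, the row $r_i=(m_{i,1},\dots,m_{i,K})$ so that $r_i\cdot\bar{r}_{i+j}=\sum_{k=1}^{K}m_{i,k}\bar{m}_{i+j,k}$. Substituting into the definition of the row-correlation function gives
\[
R_M(j)=\sum_{i=1}^{N-j}\sum_{k=1}^{K}m_{i,k}\bar{m}_{i+j,k}.
\]
I would then swap the two finite sums (Fubini for finite sums, no convergence issue) to obtain
\[
R_M(j)=\sum_{k=1}^{K}\sum_{i=1}^{N-j}m_{i,k}\bar{m}_{i+j,k}=\sum_{k=1}^{K}A_{x_k}(j),
\]
where the inner sum is exactly the definition of the aperiodic autocorrelation $A_{x_k}(j)$ in (\ref{eq:acf}) applied to the $k$-th column $x_k=(m_{1,k},\dots,m_{N,k})$.

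For the negative range $-N+1\le j<0$, I would observe that both sides of the identity are already defined by the same conjugate-symmetry rule: the definition of $R_M$ sets $R_M(j)=\overline{R_M(-j)}$, and the definition of $A_{x_k}$ sets $A_{x_k}(j)=\overline{A_{x_k}(-j)}$. Since conjugation commutes with the finite sum over $k$, the identity for $j<0$ follows immediately from the identity for $-j>0$ already proved.

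With the identity established, the theorem reduces to restating the CCM definition: $M$ is a CCM precisely when its columns form a complementary code set, i.e.\ when $\sum_{k=1}^{K}A_{x_k}(j)=NK\delta_j$ for $-N+1\le j\le N-1$, which by the identity is equivalent to $R_M(j)=NK\delta_j$ on the same range. I do not anticipate any genuine obstacle here; the whole argument is essentially an index swap together with the fact that the two conjugate-symmetry conventions match. The only thing to be careful about is keeping the summation bounds consistent with the definitions and ensuring that one does not accidentally invoke $A_{x_k}$ or $R_M$ outside the prescribed range $|j|\le N-1$.
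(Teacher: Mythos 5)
Your proposal is correct: the interchange of the two finite sums over rows and columns yields $\mathrm{R}_M(j)=\sum_{k=1}^{K}\mathrm{A}_{x_k}(j)$ for $0\le j\le N-1$, the matching conjugate-symmetry conventions extend it to negative $j$, and the biconditional then reduces to the definition of a complementary code set. The paper states this theorem without proof (attributing it to \cite{ref2,ref5}), but its preceding remark --- that $\mathrm{R}_M(j)$ is the sum of the $j$-th diagonal of $M\cdot M^*$ and can be represented as the sum of the autocorrelation functions of the columns of $M$ --- is exactly the identity you verify, so your argument is essentially the intended one.
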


\section{Symmetries of CCMs}
\label{sec:3}

Let $C_{N,K}(p)$ denote the set of all CCMs of dimension $N\times K$ consisting of $p$-th roots of unity.  Any function $f:C_{N,K}(p) \rightarrow C_{N,K}(p)$ that maps CCMs to CCMs of the same dimension will be called a CCM-preserving symmetry of $C_{N,K}(p)$.  In particular, if $A_M(j)$ represents the composite autocorrelation of $M$, then we require
\[
A_{f(M)}(j)=0
\]
for all $M\in C_{N,K}(p)$ and $|j|=0,1,\ldots,N-1$.

Coxson and Haloupek discussed the different ways to create a new $N \times K$ CCM using its symmetries as described in the following theorem \cite{ref1}.
\begin{thm}Suppose that $M$ is a $p$-phase $N\times K$ CCM.  Then the following operations are CCM-preserving symmetries, i.e., the resulting matrix is a $N\times K$ CCM.
\begin{itemize}
\item[(i)] Take any column $x$ of $M$ and replace, $x$, with $\alpha x$ where $\alpha$ is any unit-modulus complex number.
\item[(ii)] Take any column $x=[m_{1}, \ldots , m_{N}]$ of $M$ and replace $x$ with its conjugate reversal $\hat{x}=[\bar{m}_{N}, \ldots, \bar{m}_{1}]$.
\item[(iii)] Form $\bar{M}$.
\item[(iv)] Form the product $Diag[{\beta, \beta^2, \beta^3,\ldots,\beta^N}]M$, with $\beta$ being any unit-modulus complex number.  
\item[(v)] Form $MP$ where $P$ any $K\times K$ permutation matrix.
\end{itemize}
\label{constructions}
\end{thm}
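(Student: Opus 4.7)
The plan is to verify each operation by reducing to the Coxson--Haloupek characterization: $M$ is a CCM if and only if its row Gramian $MM^{*}$ is diagonally regular. Equivalently, I will show that the $j$-th diagonal sum of the row Gramian of the transformed matrix, namely $R_{M'}(j)$, still equals $NK\delta_{j}$. Four of the five operations are expressible as simple left- or right-multiplication of $M$ by a structured matrix and reduce to one-line identities; the column conjugate reversal in (ii) is the one case that requires a genuine indexing argument on a single column's autocorrelation.

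For (i), (iii), (iv), (v) I plan to write each operation as a matrix identity and then invoke the row Gramian characterization directly. Column scaling of a single column is right-multiplication by a diagonal matrix $D$ with $DD^{*}=I$ (since $|\alpha|=1$), and column permutation is right-multiplication by a real permutation matrix $P$ with $PP^{*}=I$; in both cases $(MD)(MD)^{*}=MM^{*}$ and $(MP)(MP)^{*}=MM^{*}$, so the row Gramian is literally unchanged. Conjugation gives $\overline{M}\,\overline{M}^{*}=\overline{MM^{*}}$, so each diagonal sum is conjugated, and since $NK\delta_{j}$ is real, the diagonal regularity is preserved. For (iv), with $D=\mathrm{diag}(\beta,\beta^{2},\ldots,\beta^{N})$ and $|\beta|=1$, the identity $(DM)(DM)^{*}=D(MM^{*})D^{*}$ has $(k,\ell)$-entry $\beta^{k-\ell}(MM^{*})_{k,\ell}$; since the scalar factor $\beta^{j}$ is constant along the $j$-th diagonal, it factors out of the sum, yielding $\beta^{j}R_{M}(j)=0$ for $j\neq 0$.

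Operation (ii) is the only one that does not reduce cleanly to a matrix identity on $MM^{*}$, so I will work column-wise through $R_{M}(j)=\sum_{i=1}^{K}A_{x_{i}}(j)$ and prove the stronger fact that $A_{\hat{x}}(j)=A_{x}(j)$ for every single column $x=(m_{1},\ldots,m_{N})$ and its conjugate reversal $\hat{x}_{i}=\bar{m}_{N-i+1}$. Substituting into the definition of the autocorrelation gives
\[
A_{\hat{x}}(j)=\sum_{i=1}^{N-j}\hat{x}_{i}\,\overline{\hat{x}_{i+j}}=\sum_{i=1}^{N-j}\bar{m}_{N-i+1}\,m_{N-i-j+1},
\]
and the change of variable $k=N-i-j+1$ (under which $N-i+1=k+j$ and $k$ ranges from $1$ to $N-j$) converts this to $\sum_{k=1}^{N-j}m_{k}\bar{m}_{k+j}=A_{x}(j)$. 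Since replacing a single column by its conjugate reversal preserves that column's autocorrelation, it preserves the composite sum, and the CCM property carries over. The only mildly delicate point of the entire argument is this reindexing step in (ii); everything else is matrix algebra applied to the row Gramian together with the characterization of CCMs via diagonal regularity.
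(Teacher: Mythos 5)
Your proof is correct, but note that the paper itself offers no proof of this theorem at all: it is quoted verbatim from Coxson and Haloupek \cite{ref1} as a known result, so there is no in-paper argument to match yours against. What you have produced is a self-contained verification, and it is the natural one given the machinery the paper does develop in Section 2: you reduce (i), (iii), (iv), (v) to one-line identities on the row Gramian ($DD^{*}=I$, $PP^{*}=I$, $\overline{M}\,\overline{M}^{*}=\overline{MM^{*}}$, and $(DM)(DM)^{*}=D(MM^{*})D^{*}$ with the constant factor $\beta^{k-\ell}$ along each diagonal), and you handle (ii) column-wise via the reindexing $k=N-i-j+1$, which correctly shows $A_{\hat{x}}(j)=A_{x}(j)$ exactly (conjugation and reversal each conjugate the autocorrelation, so their composition fixes it). One small point of bookkeeping: the paper states the Coxson--Haloupek lemma only in the ``if'' direction (diagonal regularity of $MM^{*}$ implies CCM), whereas your argument for (iii) and (iv) tacitly uses the converse as well; that converse is supplied by the paper's Theorem 2.1, since $R_{M}(j)$ is precisely the $j$-th diagonal sum of $MM^{*}$, so your reliance on the full equivalence is justified within the paper's framework. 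Also worth being aware of, though it does not affect correctness: for (i) and (iv) with arbitrary unimodular $\alpha$ and $\beta$ the output need not remain $p$-phase, but the theorem as stated only claims the result is a CCM, which is what you prove.
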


Using this theorem we can see that with one CCM many more can be created. Coxson and Russo created an algorithm for narrowing down the search space for complementary code matrices using these symmetries \cite{ref2} .  This is important due to the fact that the search space for p-phase $N\times K$ CCMs is $p^{NK}$; as the dimensions increase the search takes exponentially longer to conduct an exhaustive search. For example, the exhaustive search for 4-phase $4 \times 4$ CCMs, which generated $4300800$, took about 24 hours using parallel processing on 4 cpus.  With the help of this theorem, it is possible to narrow down the search space if it is possible to advoid checking all but one from a set of equivalent CCMs. There is a wonderful example of this by Gibson where he classified 4-phase Golay pairs into equivalence classes \cite{ref3}.  We aim to do the same, but for more general CCMs. 

A quick example of this is as follows. 
\begin{exmp}
You can see here that we can take one matrix $M$ and make two other matrices out of $M$, though we can make many more besides these two using Theorem \ref{constructions}.  To get the first matrix we applied (i) multiplying $x_{1}, x_{2}, $ and $x_{4}$ by $-i$ and $x_{3}$ by $-1$. Then to that matrix we applied (iv) with $\beta = -i$. Yet the other interesting thing about this example is that we were able to make the entire first row ones as well as $m_{2,1}$.  This is what we call a normalized CCM.

\begin{equation}
M=\left(\begin{array}{cccc}-1&-1&-i&-1\\-i&-i&1&1\\-1&-1&1&1\\i&-i&-1&-i\end{array}\right)
\end{equation}
$M_{1}=[-ix_{1}, -ix_{2}, -x_{3}, -ix_{4}]$
\begin{equation}
M_{1}=\left(\begin{array}{cccc}i&i&i&i\\-1&-1&-1&-i\\i&i&-1&-i\\1&-1&1&-1\end{array}\right)
\end{equation}
$M_{2}=Diag[-i, 1, i, -1]M_{1}$
\begin{equation}
\label{eq:1}
M_{2}=\left(\begin{array}{cccc}1 & 1 & 1 & 1 \\1 & 1 & 1 & i\\ -1 & -1 & -i & 1\\ 1&-1&1&-1 \end{array}\right)
\end{equation}
\end{exmp}

\subsection{Complementary Codes}
We proceeded to find the relationships between different CCM symmetries.  Our results generalize Coxson's who examined symmetries that preserve low sidelobe levels in Barker codes \cite{ref6}.  We begin with definitions and notations to express the symmetries mentioned in Theorem \ref{constructions}.
\begin{defn} Let the following assist in representing the different operations Theorem \ref{constructions}.
\begin{itemize}
\item[(a)]$\mathcal{Z} = \{(\alpha_{1}, \alpha_{2}, \ldots , \alpha_{K}) : \alpha_{k}^p=1 \}$. 
\item[(b)]$\mathcal{A}=\{(t_{1}, t_{2}, t_{3}, \ldots , t_{K}) : t_{k} \in \{0,1\}\}$.
\end{itemize}
\end{defn}
\begin{defn} The following are definitions that represent the CCM preserving operations, Theorem \ref{constructions}, where $U \in \mathcal{Z}$ and $T \in \mathcal{A}$.
\begin{itemize}
\item[(a)]$C$ denotes multiplying columns in $M$ Theorem \ref{constructions} (i)
\begin{equation*}
C_{U}M=[\alpha_{1}x_{1}, \alpha_{2}x_{2},\ldots, \alpha_{K}x_{K}] 
\end{equation*}
\item[(b)]$\rho$ denote the conjugate reversals of columns in $M$. Theorem \ref{constructions} (ii) 
\begin{equation*}
\rho_T M=[\omega^{2-t_{1}}(x_{1}), \omega^{2-t_{2}}(x_{2}),\ldots, \omega^{2-t_{K}}(x_{K})] 
\end{equation*}
\item[(c)]$S$ denote the conjugate of the matrix $M$. Theorem \ref{constructions} (iv)
\begin{equation*}
SM=\bar{M}
\end{equation*}
\item[(d)]$Q(\beta)= N\times N$ diagonal matrix with values going from $\beta, \beta^2,\ldots\beta^N$ Theorem \ref{constructions} (iv)
\begin{equation*}
Q(\beta) M=[\beta r_{1}, \beta^2 r_{2},\ldots, \beta^N r_{N}]= [Q(\beta)x_{1},Q(\beta)x_{2},\ldots,Q(\beta)x_{K}] 
\end{equation*}
\item[(e)]$P=$ a permutation of the columns. Theorem \ref{constructions} (v)
\begin{equation*}
P M= [x_{\sigma(1)}, x_{\sigma(2)}, x_{\sigma(3)},\ldots,x_{\sigma(K)}]
\end{equation*}
\item[(f)]$\bar{U}=[\bar{\alpha_{1}}, \bar{\alpha_{2}}, \bar{\alpha_{3}}, \ldots , \bar{\alpha_{K}}]$
\item[(g)]$U_{P}=[\alpha_{\sigma(1)}, \alpha_{\sigma(2)}, \alpha_{\sigma(3)}, \ldots , \alpha_{\sigma(K)}]$
\item[(h)]$U_{T,\beta}=[\alpha_{1}, \alpha_{2}, \alpha_{3}, \ldots , \alpha_{K}]$ such that where $t_{k} \in T$
\begin{displaymath}
   \alpha_{k} = \left\{
     \begin{array}{lr}
      \beta^{N+1} & t_{k}=1 \\
    1 & t_{k}=0\\
     \end{array}
   \right.
\end{displaymath} 
\item[(i)] $T_{P^{-1}}= [t_{\sigma^{-1}(1)}, t_{\sigma^{-1}(2)}, t_{\sigma^{-1}(3)}, \ldots, t_{\sigma^{-1}(K)}]$
\item[(j)] $U_T=[\omega^{2-t_1}(\alpha_1),\ldots, \omega^{2-t_K}(\alpha_K)]$
\end{itemize}
\end{defn}
We now establish relationships between symmetries by viewing them as group generators.
\begin{lem} \label{lem:relations} Let $M$ be a $N\times K$ p-phase CCM, $M=[x_{1}, x_{2}, x_{3},\ldots,x_{K}]$ and \\$x_{k}=[m_{1,k}, m_{2, k}, m_{3,k}, \ldots , m_{N,k}]^T$. Then
\begin{itemize}
\item[(i)] $C_{U}\rho_{T}M = \rho_{T}C_{U_{T}}M$ 
\item[(ii)] $C_{U}SM = S C_{\bar{U}}M$ 
\item[(iii)] $C_{U}Q(\beta)M=Q(\beta)C_{U}M$ 
\item[(iv)] $C_{U}PM=PC_{U_{P}}M$ 
\item[(v)] $\rho_{T}SM=S \rho_{T}M$
\item[(vi)] $\rho_{T}PM=P\rho_{T_{P^{-1}}}M$
\item[(vii)] $S Q(\beta)M= Q(\bar{\beta})S M$
\item[(viii)] $S PM=P SM$
\item[(ix)] $Q(\beta) \rho_{T} M=C_{U_{T, \beta}}\rho_{T} Q_{\beta}M$
\item[(x)] $Q(\beta) P M= P Q(\beta) M$
\end{itemize}
\end{lem}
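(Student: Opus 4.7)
The plan is to verify each of the ten identities by expanding both sides in coordinates (either entry-wise $m_{i,k}$ or column-wise $x_k$) using the definitions in Definition~3.2, and then reading off equality. Since each symmetry is completely determined by what it does to a single entry, every identity reduces to a finite calculation on $m_{i,k}$ after tracking how the indices permute and which factors conjugate. I would organize the identities into three groups by difficulty: the trivially commuting pairs, the ``twisted'' commutation relations, and the single relation (ix) which forces the introduction of the correction $U_{T,\beta}$.

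Group one contains (iii), (v), (viii), and (x). These are genuine commutation statements: $Q(\beta)$ acts on rows while $C_U$ and $P$ act on columns, so they commute by the associativity of matrix multiplication (diagonal-times-arbitrary-times-permutation); similarly $S$ (entrywise conjugation) commutes with $\rho_T$ and with $P$ because both reversal and permutation act on indices only, not on values. I would dispatch this group by one line each, citing that the operations affect disjoint data (rows vs.\ columns, or indices vs.\ values).

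Group two contains (i), (ii), (iv), (vi), and (vii). These are ``commute up to a twist.'' For (i) I would compute the $k$th column of $C_U\rho_T M$: if $t_k=1$ it equals $\alpha_k\hat{x}_k$, and to pull $\rho_T$ outside one must first apply the complex conjugate $\bar\alpha_k$ to $x_k$, which is exactly the action of $\omega^{2-t_k}$ on $\alpha_k$ in the definition of $U_T$. Identity (ii) is the observation that $C_U$ followed by conjugation is the same as conjugation followed by scaling by $\bar U$. Identities (iv) and (vi) express that column permutation relabels the parameter lists, giving $U_P$ and $T_{P^{-1}}$ respectively; the inverse appears in (vi) because $T$ is indexed by the \emph{target} column. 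Identity (vii) is just $\overline{\beta^i}=\bar\beta^i$ applied row by row.

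The single genuine obstacle is (ix), which I expect to be the heart of the lemma and the reason the correction list $U_{T,\beta}$ in Definition~3.2(h) is defined the way it is. I would compute column $k$ of both sides when $t_k=1$. On the left, applying $\rho_T$ first then $Q(\beta)$ gives the $i$th entry $\beta^{\,i}\,\overline{m_{N+1-i,k}}$. On the right, applying $Q(\beta)$ first then $\rho_T$ gives the $i$th entry $\overline{\beta^{\,N+1-i}}\,\overline{m_{N+1-i,k}}=\bar\beta^{\,N+1-i}\,\overline{m_{N+1-i,k}}$. The ratio of left to right is $\beta^{\,i}\beta^{\,N+1-i}=\beta^{\,N+1}$, independent of $i$, so the two columns differ only by the global scalar $\beta^{\,N+1}$, which is precisely the correction $\alpha_k=\beta^{\,N+1}$ recorded in $U_{T,\beta}$ when $t_k=1$ (and by $1$ when $t_k=0$, in which case $\rho_T$ is the identity on column $k$ and both sides agree trivially). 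This $i$-independence is what makes the relation a \emph{column-scaling} correction rather than something more complicated, and it is the conceptual content of the lemma. With (ix) settled, the remaining checks are bookkeeping and the lemma follows.
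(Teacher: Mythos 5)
Your proposal is correct and takes essentially the same route as the paper: both arguments verify each identity by expanding the columns/entries according to Definition~3.2, with (ix) as the only relation forcing the correction $C_{U_{T,\beta}}$, and your entrywise observation that the left-to-right ratio $\beta^{i}\beta^{N+1-i}=\beta^{N+1}$ is independent of $i$ is precisely the content of the paper's column-by-column manipulation there. The parts you dispatch as routine bookkeeping are exactly those the paper either proves by the identical expansion ((i), (v), (vi)) or omits as analogous.
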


\begin{proof}[proof] 
\begin{itemize}
\item[(i)]\begin{displaymath}
\begin{array}{rl}
 C_{U}\rho_{T}M &=C_{U}\rho_{T}[x_{1}, x_{2}, x_{3}, \ldots , x_{K}]\\ 
&= C_{U}[\omega^{2-t_{1}}(x_{1}), \omega^{2-t_{2}}(x_{2}), \omega^{2-t_{3}}(x_{3}), \ldots , \omega^{2-t_{K}}(x_{K})]\\
&=[\alpha_{1}\omega^{2-t_{1}}(x_{1}), \alpha_{2}\omega^{2-t_{2}}(x_{2}), \alpha_{3}\omega^{2-t_{3}}(x_{3}), \ldots , \alpha_{K}\omega^{2-t_{K}}(x_{K})]\\
&=\rho_{T}[\omega^{2-t1}(\alpha_{1})x_{1}, \omega^{2-t_{2}}(\alpha_{2})x_{2}, \omega^{2-t_{3}}(\alpha_{3})x_{3}, \ldots , \omega^{2-t_{K}}(\alpha_{K})x_{K}]\\
&=\rho_{T}C_{U_{T}}[x_{1}, x_{2}, x_{3}, \ldots , x_{K}]\\
&=\rho_{T}C_{U_{T}}M
\end{array}
\end{displaymath}
\item[(v)]\begin{displaymath}
\begin{array}{rl}
\rho_{T}SM &=\rho_{T}S[x_{1}, x_{2}, x_{3}, \ldots , x_{K}]\\ &= \rho_{T}[\bar{x}_{1}, \bar{x}_{2}, \bar{x}_{3}, \ldots , \bar{x}_{K}]\\
&=[\omega^{2-t_{1}}(\bar{x}_{1}), \omega^{2-t_{2}}(\bar{x}_{2}), \omega^{2-t_{3}}(\bar{x}_{3}), \ldots , \omega^{2-t_{K}}(\bar{x}_{K})]\\
&=S[\omega^{2-t_{1}}(x_{1}), \omega^{2-t_{2}}(x_{2}), \omega^{2-t_{3}}(x_{3}), \ldots , \omega^{2-t_{K}}(x_{K})]\\
&=S \rho_{T}[x_{1}, x_{2}, x_{3}, \ldots , x_{K}]\\
&=S \rho_{T}M
\end{array}
\end{displaymath}
\item[(vi)] \begin{displaymath}
\begin{array}{rl}
\rho_{T}PM &=\rho_{T}P[x_{1}, x_{2}, x_{3}, \ldots , x_{K}]\\ 
&= \rho_{T}[x_{\sigma(1)}, x_{\sigma(2)}, x_{\sigma(3)}, \ldots , x_{\sigma(K)}]\\
&=[\omega^{2-t_{1}}(x_{\sigma(1)}), \omega^{2-t_{2}}(x_{\sigma(2)}), \omega^{2-t_{3}}(x_{\sigma(3)}), \ldots , \omega^{2-t_{K}}(x_{\sigma(K)})]\\
&=P[\omega^{2-t_{\sigma^{-1}(1)}}(x_{1}), \omega^{2-t_{\sigma^{-1}(2)}}(x_{2}), \omega^{2-t_{\sigma^{-1}(3)}}(x_{3}), \ldots , \omega^{2-t_{\sigma^{-1}(K)}}(x_{K})]\\
&=P\rho_{T_{P^{-1}}}[x_{1}, x_{2}, x_{3}, \ldots , x_{K}]\\
&=P\rho_{T_{P^{-1}}}M
\end{array}
\end{displaymath}
\item[(ix)] \begin{displaymath}
\begin{array}{rl}
Q(\beta) \rho_{T} $M$ &= Q(\beta)\rho_{T}[x_{1}, x_{2}, x_{3},\ldots,x_{K}]\\
&=Q(\beta)[\omega^{2-t_{1}}(x_{1}),\omega^{2-t_{2}}(x_{2}),\omega^{2-t_{3}}(x_{3}),\ldots,\omega^{2-t_{K}}(x_{K})]\\
&=[Q(\beta)\omega^{2-t_{1}}(x_{1}),Q(\beta)\omega^{2-t_{2}}(x_{2}),Q(\beta)\omega^{2-t_{3}}(x_{3}),\ldots,Q(\beta)\omega^{2-t_{K}}(x_{K})]\\
&=C_{U_{T,\beta}}[\bar{\alpha}_{1}Q(\beta)\omega^{2-t_{1}}(x_{1}),\bar{\alpha}_{2}Q(\beta)\omega^{2-t_{2}}(x_{2}),\bar{\alpha}_{3}Q(\beta)\omega^{2-t_{3}}(x_{3}),\ldots,\bar{\alpha}_{K}Q(\beta)\omega^{2-t_{K}}(x_{K})]\\
&=C_{U_{T,\beta}}\rho_{T}[\omega^{2-t_{1}}(\bar{\alpha}_{1}Q(\beta)\omega^{2-t_{1}}(x_{1})),\omega^{2-t_{2}}(\bar{\alpha}_{2}Q(\beta)\omega^{2-t_{2}}(x_{2})),\ldots,\omega^{2-t_{K}}(\bar{\alpha}_{K}Q(\beta)\omega^{2-t_{K}}(x_{K}))]\\
&=C_{U_{T,\beta}}\rho_{T}[\omega^{2-t_{1}}(\bar{\alpha}_{1}Q(\beta))x_{1},\omega^{2-t_{2}}(\bar{\alpha}_{2}Q(\beta))x_{2},\ldots,\omega^{2-t_{K}}(\bar{\alpha}_{K}Q(\beta))x_{K}]\\
&=C_{U_{T,\beta}}\rho_{T}Q(\beta)[x_{1}, x_{2}, x_{3}, \ldots, x_{K}]\\
&= C_{U_{T,\beta}}\rho_{T} Q_{\beta}M
  \end{array}
  \end{displaymath}
  \end{itemize}
 The proofs for parts (ii), (iii), (iv), (vii), (viii), and (x) are analogous and will be omitted.
\end{proof}

\begin{defn}
The complementary group $G$ of the set of all $N\times K$ $p$-phase CCMs is defined to be the group generated by the symmetries $S,P,C_{U},\rho_{T}, Q(\beta)$ and their relations given in Lemma \ref{lem:relations}.
\end{defn}

\begin{thm} \label{th:bound-complementary-group}
The cardinality of the complementary group $G$ of $C_{N,K}(p)$ is bounded by 
\begin{equation}
|G|\leq 2^{K+1}p^{K+1}K!
\end{equation}
\end{thm}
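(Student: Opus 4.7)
The plan is to establish the bound by exhibiting a normal form for elements of $G$: every element should be expressible as a product $S^a \, P \, \rho_T \, C_U \, Q(\beta)$ with $a \in \{0,1\}$, $P$ a $K \times K$ permutation, $T \in \mathcal{A}$, $U \in \mathcal{Z}$, and $\beta$ a $p$-th root of unity. Counting tuples gives at most $2 \cdot K! \cdot 2^K \cdot p^K \cdot p = 2^{K+1} p^{K+1} K!$ such products, which is the desired bound (note that distinct normal-form tuples need not yield distinct group elements, so we really are only getting an upper bound, as claimed).

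To justify the existence of this normal form, I would let $H \subseteq G$ denote the set of group elements admitting such an expression and show that $H$ is closed under right-multiplication by each of the five generators $S$, $P'$, $C_V$, $\rho_{T'}$, and $Q(\gamma)$. Since $H$ contains the identity and $G$ is generated by these symmetries, closure will force $H = G$. Each of the five closure checks is carried out by applying the ten relations of Lemma \ref{lem:relations} in a prescribed order so as to sweep the newly appended generator back into its designated slot. An appended $S$ is pushed past $Q(\beta)$ via (vii), past $C_U$ via (ii), past $\rho_T$ via (v), and past $P$ via (viii); an appended $P'$ is handled by (x), (iv), and (vi); an appended $Q(\gamma)$ is absorbed directly into the existing $Q(\beta)$ slot; and an appended $C_V$ is absorbed into the $C_U$ slot via (iii) together with the componentwise abelian structure of $\mathcal{Z}$.

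The main obstacle will be the case of right-multiplication by $\rho_{T'}$. Starting from $S^a P \rho_T C_U Q(\beta) \rho_{T'}$, one must first invoke relation (ix) to commute $Q(\beta)$ past $\rho_{T'}$, which introduces an auxiliary column-multiplication factor $C_{U_{T',\beta}}$; this new factor is then merged with $C_U$ componentwise in $\mathcal{Z}$; relation (i) is used to move $\rho_{T'}$ leftward past the merged $C$; and finally $\rho_T \rho_{T'}$ must be collapsed into a single $\rho_{T \oplus T'}$. This last collapse depends on the key observation that $\omega$ (conjugate reversal) is an involution on each column, so the subgroup generated by the $\rho_T$ is the abelian group $(\mathbb{Z}/2\mathbb{Z})^K$ with $2^K$ elements, precisely accounting for the $2^K$ factor in the final count. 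All other closure checks are routine index-tracking substitutions; this $\rho_{T'}$ step is where several relations must be chained together and where careful bookkeeping of the resulting tuples is essential.
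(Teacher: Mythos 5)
Your proposal is correct and takes essentially the same approach as the paper: the paper's own proof rests on the very same normal form $S\,P\,C_U\,\rho_T\,Q(\beta)$ (your version merely interchanges the $\rho_T$ and $C_U$ slots, which is immaterial by relation (i) of Lemma \ref{lem:relations}) and obtains the bound by the identical count $2\cdot K!\cdot p^K\cdot 2^K\cdot p$. The only difference is one of rigor, not of route: the paper asserts the normal form without argument, while you justify it by the closure-under-right-multiplication induction using the ten relations, correctly handling the delicate $\rho_{T'}$ case via relation (ix).
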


\begin{proof}
Each CCM preserving operation on the matrix $M$ can be represented in the following form.
\begin{equation*}
S P C_{u} \rho_{T} Q(\beta)
\end{equation*}
Since $S$ represents the conjugate of the matrix there are two possibilities, i.e., $|S|=2$.  $P$ represents permutations of the columns $|P|=K!$.  $C_{U}$ is multiplying columns by different unit-modulus complex number i.e.  $|C_{U}|=p^K$.  Whether or not the conjugate reversal of a column is take has $|\rho_{T}|=2^K$.  And if we do progressive multiplication there is $|Q(\beta)|=p$.
So this will will produce a max of 
\begin{equation*}
|G|\leq |S||P||C_{U}||\rho_{T}||Q(\beta)|=2 K! p^K 2^K p= 2^{K+1}p^{K+1}K!
\end{equation*} CCMs from $M$.   
\end{proof}

\section{CCM Search Algorithm}
\label{sec:4}

The following is an explanation of our exhaustive search algorithm for CCMs.  Our search algorithm is based on the search algorithm of Coxson and Russo \cite{ref2}.  While their work searches for the binary CCMs, ours focuses on any poly-phase. We begin by presenting how a CCM can be represented in a normalized form as shown in the following lemma an example of which was expressed in Example \ref{eq:1}.

\begin{lem}[\textbf{Normalized CCM}]
Any $N\times K $ CCM can be normalized by transforming the matrix into another matrix (using Theorem \ref{constructions}) where the entire first row consists of 1s and the first element in the second row is a 1.
\end{lem}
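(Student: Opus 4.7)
The plan is to apply two of the CCM-preserving operations from Theorem \ref{constructions}, namely progressive multiplication (iv) and column multiplication (i), in a way that simultaneously forces the first row to be all $1$s and the $(2,1)$ entry to be $1$. Since both operations are CCM-preserving, their composition is also CCM-preserving, so the resulting matrix will still be a CCM of the required dimension.

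First I would parametrize the combined effect. Writing $M=[m_{n,k}]$, after applying $Q(\beta)$ followed by $C_U$ with $U=(\alpha_1,\ldots,\alpha_K)$, the $(n,k)$ entry becomes $\alpha_k\beta^n m_{n,k}$. To make the first row all $1$s I would solve $\alpha_k\beta\, m_{1,k}=1$ for each $k$, which gives the forced choice
\[
\alpha_k=\bar{\beta}\,\bar{m}_{1,k}.
\]
This leaves $\beta$ as a single free parameter. I would then impose the remaining condition $\alpha_1\beta^2 m_{2,1}=1$, which after substitution collapses to the single equation $\beta\,\bar{m}_{1,1}\,m_{2,1}=1$, yielding
\[
\beta=m_{1,1}\,\bar{m}_{2,1}.
\]
Since $m_{1,1}$ and $m_{2,1}$ are unit-modulus, so is $\beta$, and consequently every $\alpha_k$ is unit-modulus as well; hence both operations are legitimate instances of (i) and (iv).

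To finish, I would check directly that with these choices the first row entries simplify to $1$ and that the $(2,1)$ entry equals $|m_{1,1}|^2|m_{2,1}|^2=1$, then invoke Theorem \ref{constructions} to conclude that the normalized matrix is still a CCM. There is no real obstacle here beyond keeping track of the two families of scalars; the only subtlety worth flagging is that $\beta$ need not be a $p$-th root of unity (indeed the remaining entries of the normalized matrix may leave the set of $p$-th roots of unity), but the lemma statement only requires the output to be a CCM in the normalized form, which is exactly what the construction delivers.
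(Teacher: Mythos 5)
Your proof is correct and takes essentially the same approach as the paper: the paper normalizes using only symmetries (i) and (iv) as well, applying column multiplication by $\bar{m}_{1,k}$, then $Q(\beta)$ with $\beta=\bar{m}_{2,1}m_{1,1}$, then column multiplication by $\bar{\beta}$, which composes to exactly your single-pass choice $\alpha_k=\bar{\beta}\,\bar{m}_{1,k}$ with the same $\beta$. One minor note: the subtlety you flag is vacuous here, since $\beta=m_{1,1}\bar{m}_{2,1}$ and each $\alpha_k$ are products of $p$-th roots of unity, so the normalized matrix in fact remains $p$-phase.
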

\begin{proof}
Let $M$ be a $N\times K $ CCM with $r_{1}=[m_{1,1}, m_{1,2}, m_{1,3}, \ldots , m_{1,K}]$ and  $r_{2}=[m_{2,1}, m_{2,2}, m_{2,3}, \ldots , m_{2,K}]$.  In this proof it is sufficient to just show the operations being applied to the first two rows, but the reader may note that these operations will be applied to the other rows as well.

Here we used (i) in Theorem \ref{constructions} multiplying each column of $M$ by the conjugate of the first value of each column.  This makes the entire first row ones since $x\bar{x}=1$.
\begin{align*}
\left(\begin{array}{ccccc}m_{1,1}&m_{1,2}&m_{1,3}&\ldots&m_{1,K}\\m_{2,1}&m_{2,2}&m_{2,3}&\ldots&m_{2,K}\\
\vdots&\vdots&\vdots&\vdots&\vdots \end{array}\right) \Rightarrow \left(\begin{array}{ccccc}\bar{m}_{1,1} m_{1,1}&\bar{m}_{1,2}m_{1,2}&\bar{m}_{1,3}m_{1,3}&\ldots&\bar{m}_{1,K}m_{1,K}\\ \bar{m}_{1,1} m_{2,1}&\bar{m}_{1,2}m_{2,2}&\bar{m}_{1,3}m_{2,3}&\ldots&\bar{m}_{1,K}m_{2,K}\\
\vdots&\vdots&\vdots&\vdots&\vdots \end{array}\right)
\end{align*}
Next we uses (iv) in Theorem \ref{constructions} and the fact that Diag$(\overline{m_{2,1} \bar{m}_{1,1}})=$Diag$(\bar{m}_{2,1} m_{1,1})$.  
\begin{align*}
\left(\begin{array}{ccccc}\bar{m}_{2,1} m_{1,1}&\bar{m}_{2,1} m_{1,1}&\bar{m}_{2,1} m_{1,1}&\ldots&\bar{m}_{2,1} m_{1,1}\\ (\bar{m}_{2,1}m_{1,1})^2 \bar{m}_{1,1} m_{2,1}& (\bar{m}_{2,1}m_{1,1})^2 \bar{m}_{1,2}m_{2,2}& (\bar{m}_{2,1}m_{1,1})^2 \bar{m}_{1,3}m_{2,3}&\ldots& (\bar{m}_{2,1}m_{1,1})^2 \bar{m}_{1,K}m_{2,K}\\
\vdots&\vdots&\vdots&\vdots&\vdots \end{array}\right)
\end{align*}
The final step is to now multiply each column by $\overline{\bar{m}_{2,1} m_{1,1}}=m_{2,1} \bar{m}_{1,1}$ to bring the first row back to all 1's.  Note now though that in this new matrix, $M'$, $m'_{2,1}=(\bar{m}_{2,1}m_{1,1})^2 (\bar{m}_{1,1} m_{2,1})^2$ also known as $m'_{2,1}=1$  The normalized version of $M$ can now be written as $M'$.
\begin{align*}
M'=\left(\begin{array}{ccccc}1&1&1&\ldots&1\\
1&m'_{2,2}&m'_{2,3}&\ldots&m'_{2,K}\\
\vdots&\vdots&\vdots&\vdots&\vdots \end{array}\right)
\end{align*}
\end{proof}

\subsection{Short summary of search algorithm}

The algorithm is an exhaustive search for all representations of CCMs.  The search implements the known symmetries to avoid finding a mass amount of equivalent CCMs. We construct the matrix from the outside in.  We start with making sure the matrix is normalized, setting the first row all equal to ones and first element of second row a one.  We then force the last row to sum to zero.  Next we imagine the unit circle and name the first angle, between the origin and the first of the p-phase, $\omega$. We can define a method of sorting where $0\omega<\omega<2\omega<\ldots<(p-1)\omega$. When constructing $r_{2}$ we want the corresponding $K$-tuples to follow this sorting process.  We then construct $r_{N-1}$ by only using $K$-tuples whose sum is equal to the opposite of the partial sum of the row-correlation function of the second to last off diagonal, meaning summing up all of the off diagonal minus the one that includes the row that we are on.  We save time here by not looking at $K$-tuples that we know will make the $N-2$ off diagonal not sum to zero.  From here we create $r_{3}$ based off of $r_{2}$.  If $r_{2}$ has consecutive values that are equal, then at those positions in $r_{3}$ the values need to be sorted.  Finally we continue searching for values for the other rows by continuously looking at their partial sum of their row-correlation function.  Once we have created all of these rows we just need to check that the remaining $N-1-\frac{N}{2}$ off diagonals also sum to zero.  If so the matrix is a CCM.

There are many similarities and one major difference between our search algorithm and Jon Russo's.  The similarities consist of constructing the code from the outside in and calculating the autocorrelation functions as the search progresses.  Both algorithms normalize the first row, to account for Theorem \ref{constructions} (i), and set the sum of the last row to zero. They also make sure the conjugate reversal of the column is less than the column, accounting for Theorem \ref{constructions} (ii). Both also make sure that the columns are ordered to avoid permuting the rows. The difference between the search algorithms is that ours is generalized for all $N\times K$ poly-phase CCMs.

One of the most difficult parts of this algorithm to understand is the sorting by increasing angles.  For this reason we will begin with an example for the reader.  Another way one might think of the increasing angles is by increasing exponents.  In a 4-phase CCM we deal with $\{1, i, -1, -i\}$ or $\{i^0, i^1, i^2, i^3\}$. 

\begin{defn}
Let $r_{n}$ be the $n$-th row of a matrix $M$.

\begin{exmp}
Let $r_{n}=[1,1,i]$
Then the options for the increasing exponents of $r_n$ are as follows:  $\{[1,1,1],$ $[1,i,1],$ $[1,-1,1],$ $[1,-i,1],$ $[i,i,1],$ $[i,-1,1],$ $[i,-i,1],$ $[-1,-1,1],$ $[-1,-i,1],$ $[-i,-i,1],$ $[1,1,i],$ $[1,i,i],$ $[1,-1,i],$ $[1,-i,i],$ $[i,i,i],$ $[i,-1,i],$ $[i,-i,i],$ $[-1,-1,i],$ $[-1,-i,i],$ $[-i,-i,i],$ $[1,1,-1],$ $[1,i,-1],$ $[1,-1,-1],$ $[1,-i,-1],$ $[i,i,-1],$ $[i,-1,-1],$ $[i,-i,-1],$ $[-1,-1,-1],$ $[-1,-i,-1],$ $[-i,-i,-1],$ $[1,1,-i],$  \\$[1,i,-i],$ $[1,-1,-i],$ $[1,-i,-i]$ $[i,i,-i],$ $[i,-1,-i],$ $[i,-i,-i],$ $[-1,-1,-i],$ $[-1,-i,-i],$ $[-i,-i,-i]\}$
\end{exmp}
Instead of there being $4^3$ possibilities for the next row there are $3\times 4^2$.  As the rows get longer or $r_{n}$ changes, this can save more and more time. All we are doing here is sorting the columns as Theorem \ref{constructions} (v) allows.

The beauty of constructing the matrix from the outside in is that this construction method allows for the next key part of this algorithm.  We can sum up the elements off diagonals of the partially constructed matrices dot product with its conjugate transpose. If we sum up all of the terms except the term that is multiplied to row one.  This sum will first have to be less than or equal to K by in order for this to happen.  
\begin{equation*}
\left(
\begin{array}{c}
r_1 \\
r_2 \\
r_3 \\
r_4
\end{array}
\right)\cdot
\left(
\begin{array}{cccc}
 \bar{r}_1 & \bar{r}_2 & \bar{r}_3 & \bar{r}_4
\end{array}
\right)=
\left(
\begin{array}{cccc}
 r_1 \bar{r}_1 & r_1 \bar{r}_2 & r_1 \bar{r}_3 & r_1 \bar{r}_4 \\
 r_2 \bar{r}_1 & r_2 \bar{r}_2 & r_2 \bar{r}_3 & r_2 \bar{r}_4 \\
 r_3 \bar{r}_1 & r_3 \bar{r}_2 & r_3 \bar{r}_3 & r_3 \bar{r}_4 \\
 r_4 \bar{r}_1 & r_4 \bar{r}_2 & r_4 \bar{r}_3 & r_4 \bar{r}_4 
\end{array}
\right)
\end{equation*}

\begin{itemize}
\item[(a)]Increasing exponents (E$(r_{n})$): 
creates a list of new rows such that for each row $r_{n^{'}}$,  $r_{n^{'}}= [i^{p_{1}},\ldots,i^{p_{K}}]$ such that $r_{n^{'}}$ is sorted based off of $r_{n}$.  Represent $r_{n}=[m_{n,1},m_{n,2},\ldots,m_{n,K}]$ for any consecutive equal m's in $r_{n}$ the $r_{n^{'}}$ of those values must be in increasing exponential form mod 4. If $r_{n}$ happens to be normalized then E$(r_{n})=$ a list of $r_{n^{'}}$'s where $r_{n^{'}}=[i^{p_{1}},i^{p_{2}},\ldots,i^{p_{K}}]$ and $p_{k} $ mod 4 $\leq p_{k+1}$ mod 4. If $r_{n}=[1,1,i,-i]=[i^0, i^0, i^1, i^3 ]$ then  E$(r_{n})=$ a list of $r_{n^{'}}$'s where $r_{n^{'}}=[i^{p_{1}},i^{p_{2}},\ldots,i^{p_{K}}]$  where $p_{1}$ mod 4 $\leq p_{2}$ mod $4$ and $0\leq p_{3} \leq 3$, $0\leq p_{4} \leq 3$.
\item[(b)]Define the absolute value of a complex number as the Taxicab Distance: $\| (a+bi) \| = |a|+|b|$ 
\end{itemize}
\end{defn}
\begin{algorithm}
\caption{Search Algorithm for $N\times K$ CCMs}
\begin{algorithmic}
\STATE Normalize $r_{1}$
\COMMENT{from this point forward we will write $\bar{r}_{1}$ as $r_{1}$}
\FORALL {K-tuples that sum to zero}
    \STATE set $r_{N}$ equal to this K-tuple
    \IF{$N=2$}
        \STATE This is a CCM
    \ENDIF
    \STATE Set $m_{2,1}=1$    
    \COMMENT{now the matrix is fully normalized}
    \FORALL {K-tuples in E($r_{1}$) }
        \STATE set $r_{2}$ equal to this K-tuple;
        \COMMENT{Partial sum of the rows 2}
        \IF{$N=3$}
            \IF{$R_M(1)=0$}
                \STATE This is a CCM
            \ELSE
                \STATE NOT A CCM
            \ENDIF
        \ENDIF
            \IF {$\|$R$_{M}(2-N)-r_{[N-1]}r_{1}\| \leq K$}
                \FORALL {K-tuples that are equivalent to $-($R$_{M}(2-N)-r_{[N-1]}r_{1})$}
                    \STATE set $r_{N-1}$ equal to this K-tuple
                    \WHILE {$2 \leq k \leq K$, $m_{2,k}=i^{P}$ and $\bar{m}_{N-1,k} m_{N,k}= i^{P'}$ where $0\leq P \leq 3$ and  $0\leq P' \leq 3$}
                        \IF {$P>P'$}
                            \STATE Not the CCM we are looking for;
                            \COMMENT{This step is taking into account the conjugate reversals of a column}
                        \ENDIF
                    \ENDWHILE
                    \FORALL {K-tuples in E($r_{2}$)}
                        \STATE set $r_{3}$ equal to this K-tuple
                            \FOR {$t=1$, $t\leq \frac{N}{2}-2 $, $t++$}
                                \IF {$\|$R$_{M}(t+2-N)-r_{[N-t-1]}r_{1}\| \leq K$}
                                    \FORALL {K-tuples whose sums are equivalent to $-($R$_{M}(t+2-N)-r_{[N-t-1]}r_{1})$}
                                        \STATE set $r_{[N-t-1]}$ equal to this K-tuple
                                    \ENDFOR
                                \ENDIF
                            \ENDFOR
                            \FOR {$s=1$, $s\leq N-1-\frac{N}{2}$, $s++$}
                                \IF {R$_{M}(s) \neq 0$}
                                    \STATE NOT A CCM; \COMMENT{Check that the remaining off diagonals sum to zero}
                                \ENDIF 
                            \ENDFOR
                            \STATE This is a CCM
                    \ENDFOR
                \ENDFOR
            \ENDIF 
    \ENDFOR
\ENDFOR
\end{algorithmic}
\end{algorithm}

\subsection{The Search}
This search was implemented in C++ using the adaptation of the Coxson-Russo search algorithm mentioned in this section.  Since we used the symmetries to speed up the search, we did not exhautively apply the symmetries.  So once the search was completed, a Mathematica program was implemented on the matrices found where all combinations of symmetries being applied to them.  The purpose was to narrow it down to the exact number of equivalence classes.  The results of which are shown in Table \ref{table:equivalence-classes}.

\section{Equivalence Classes of CCMs}
\label{sec:5}

\begin{defn}[\textbf{Equivalent CCMs}] Let $M_{1}, M_{2} \in C_{N,K}(p)$ and $g \in G$, where $G$ is the complementary group. Then $M_1$ and $M_2$ are said to be equivalent if 
\begin{equation*}
g(M_{1})=M_{2}
\end{equation*}
\end{defn}
We define $\mathcal{C}_M$ to be the equivalence class of $C_{N,K}(p)$ containing $M$, i.e.,
\[
\mathcal{C}_M = \{\tilde{M}\in C_{N,K}(p): gM=\tilde{M}, g\in G\}
\]
The number of $N \times 2$ $4$-phase CCMs for $N=1,2,3,\ldots,22$ are already known as previously discussed \cite{ref3}. So we will look at what this definition means for $N \times 4$ CCMs. The order of the symmetry group of $4$-phase $N \times 4$ CCMs $786432$.  We must check each of the resulting matrices to ensure that the set of CCMs is narrowed down to the simplest number of equivalence classes. With this theorem we able to narrow down the $4$-phase $4 \times 4$ CCMs to 24 and $4$-phase $6 \times 4$ CCMs to 1448 with the assurance that these are the number of equivalence classes, no more, no less.  We also found the number of equivalence classes for other, easier $N$'s.  We then looked at the equivalence classes to see how many of them could be formed using a Hadamard matrix.  It is interesting to mention here that the entire equivalence class need not be Hadamard matrix in order to have a Hadamard matrix representation.  This is because taking the conjugate reversal of a column does not always preserve the Hadmard matrix property

\begin{table}[H]
\caption{Equivalence Classes}
\label{table:equivalence-classes}
\centering
\begin{tabular}{c|c|c|c}
\hline\hline
$N\times K$ p-phase CCMs & *CR Algorithm & Number of Equivalency Classes & Hadamard Representations\\
\hline\\[-1.0em]
2x4 4-phase & 36 &2&2\\
3x4 4-phase & 95 &5&5\\
4x4 4-phase & 231 & 24& 17 \\
5x4 4-phase & 5246 &133&0\\
6x4 4-phase & 23448 &1448& 0\\
\hline\hline
\end{tabular}
\end{table}
*Coxson and Russo's adapted search algorithm

\subsection{Constructing Quaternary CCMs }
When dealing with CCMs, many times it is useful to look at constructions that can be used to create larger CCMs from smaller ones since the search space for the smaller CCM is, of course, smaller.  It is here that the reader could think back to some of the more well known constructions.  The Kronecker product, represented by the symbol $\oplus$, can take matrix $M_{1}$ that is a $N_{1} \times K_{1}$ CCM and $M_{2}$ that is a $N_{2} \times K_{2}$ CCM to create a new matrix $M$ where $M_{1} \oplus M_{2}=N_{1}N_{2}\times K_{1}K_{2}$.  In the concatenation theorem you can take $M_{1}$ to be a $N \times K_{1}$ CCM and $M_{2}$ to be a $N \times K_{2}$ CCM.  The concatenation theorem gives us the following: $[M_{1}, M_{2}]=N \times (K_{1}+K_{2})$ \cite{ref1}.  There are even more that you can read about yet all the ones from Coxson and Haloupek focus, as we said, on creating a CCM from a smaller dimension CCM to decrease the search space \cite{ref1}.  We can also look at creating a quad-phase CCM from the ternary CCMs.  These are a looser definition of CCMs whose entries consist of the values $\{-1,0,1\}$ and their off diagonals sum to zero. Here we decrease the search space, $3^{NK}$ as opposed to $4^{NK}$.
Say we have a matrix $M$ that is a $4$-phase CCM.
\begin{equation*}
M=\left(\begin{array}{cccc}-1 & -1 & -i & -i \\-i & -i & i & i\\ i & -i & -1 & 1\\ i&-i&i&-i \end{array}\right)
\end{equation*}
We can decompose $M$ into the matrix's real and imaginary parts $M=A+iB$, where $A$ and $B$ are ternary CCM's.
\begin{align*}
A=\left(\begin{array}{cccc}-1 & -1 & 0 & 0 \\0 & 0 & 0 & 0\\ 0 & 0 & -1 & 1\\ 0&0&0&0 \end{array}\right), 
B=\left(\begin{array}{cccc}0 & 0 & -1 & -1 \\-1 & -1 & 1 & 1\\ 1 & -1 & 0 & 0\\ 1&-1&1&-1 \end{array}\right) 
\end{align*}

Though using the ternary CCMs may seem trivial, they can have power. 

\begin{thm}[\textbf{Dual Pair Theorem}]
Assume that $A$ and $B$ are $NK$ ternary CCMs. Then $Z=A+iB$ is a $N\times K$ quad-phase CCM if\\
(i) $\mid A_{n,k} \mid + \mid B_{n,k} \mid =1$ $\{\forall n,k  | 1 \leq n \leq N$ and $ 1\leq k \leq K\}$\\
(ii) $BA^* -B^*A$ is Diagonally regular 
\end{thm}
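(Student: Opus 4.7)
The plan is to reduce the theorem to the Coxson--Haloupek lemma, which says that $Z$ is a CCM if and only if its row Gramian $Z Z^*$ is diagonally regular. So I need to verify two things: (a) every entry of $Z$ is a fourth root of unity, so $Z$ is legitimately quad-phase; and (b) $Z Z^*$ is diagonally regular.

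For (a), since $A$ and $B$ have entries in $\{-1, 0, 1\}$, hypothesis (i) forces $\{|A_{n,k}|, |B_{n,k}|\} = \{0, 1\}$; equivalently, exactly one of $A_{n,k}, B_{n,k}$ is $\pm 1$ and the other is $0$. Hence $Z_{n,k} = A_{n,k} + i B_{n,k} \in \{1, -1, i, -i\}$, as required.

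For (b), I would expand the row Gramian using $Z^* = A^* - i B^*$ (valid because $A$ and $B$ are real, so $A^* = A^T$ and $B^* = B^T$):
\[
Z Z^* = (A + iB)(A^* - iB^*) = (A A^* + B B^*) + i(B A^* - A B^*).
\]
The real part $A A^* + B B^*$ is diagonally regular because each summand is diagonally regular --- this is precisely the defining property of the ternary CCMs $A$ and $B$. The imaginary part is exactly the matrix appearing in condition (ii). (Note one subtlety: as printed, condition (ii) reads ``$B A^* - B^* A$'', but $B A^*$ has shape $N \times N$ while $B^* A$ has shape $K \times K$, so these cannot be subtracted unless $N = K$. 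The intended expression is $B A^* - A B^*$, which is exactly what appears as the imaginary part of $Z Z^*$.) Adding the two pieces, $Z Z^*$ has diagonally regular real and imaginary parts and is therefore diagonally regular itself, so Coxson--Haloupek yields the CCM property.

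The argument is essentially a one-line algebraic manipulation once Coxson--Haloupek is invoked; there is no significant obstacle. The only care needed is in parsing hypothesis (ii) correctly (as $B A^* - A B^*$) and in separating the quad-phase verification from the diagonal-regularity verification, since the two hypotheses (i) and (ii) respectively handle each.
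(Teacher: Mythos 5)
Your proof is correct and takes essentially the same route as the paper's: the same two-case verification that $Z$ is quad-phase, and the same expansion of $ZZ^*$ into $AA^*+BB^*$ plus an imaginary commutator term, concluding via the Coxson--Haloupek diagonal-regularity criterion. Your parenthetical catch is also right --- the printed $BA^*-B^*A$ is dimensionally inconsistent when $N\neq K$ (the paper's own proof carries the same typo), and the intended matrix is $BA^*-AB^*$, exactly as you state.
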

NOTE: Any pair of matrices $(A,B)$ that satisfy condition (i) is called a dual pair.

\begin{proof}
We first prove that $Z$ is quad-phase by considering two cases: \\
Case 1:$A_{i,j}=0$ and $B_{i,j} = \pm 1$.  Then $Z_{i,j}= A_{i,j} +iB_{i,j} \Rightarrow Z_{i,j} = 0 \pm i \Rightarrow Z_{i,j} \in \{1, i, -1, -i\}$. Hence, $Z$ is quad-phase. \\
Case 2: $ A_{i,j} = \pm 1$ and $B_{i,j}= 0$.  Then $Z_{i,j}= A_{i,j} +iB_{i,j} \Rightarrow Z_{i,j} = \pm 1 +i0 \Rightarrow Z_{i,j} \in \{1, i, -1, -i\}$.  Again, this proves that $Z$ is quad-phase. \\
Next, we prove that $ZZ^*$ is diagonally regular by first calculating
\begin{align*}
ZZ^*&=(A+iB)(A+iB)^*\\
&=(A+iB)(A^*-iB^*)\\
&=AA^*+i(BA^*-B^*A)+BB^*
\end{align*}
Since A and B are ternary CCMs, $AA^*$ and $BB^*$ are diagonally regular.  Moreover, by assumption $BA^*-B^*A$ is also diagonally regular.  Thus, $ZZ^*$ must be diagonally regular.
\end{proof}

In the following table, a large portion, and in some cases all, of our CCM equivalence classes can be represented as a dual pair, which has a smaller search space.  This is represented in the following table that compares the results of implementing different types of construction methods.  

\begin{table}[H]
\caption{CCM Equivalence Classes}
\label{table:construction-methods}
\centering
\begin{tabular}{c|c|c|c|c}
\hline\hline
$N\times K$ p-phase CCMs & Equivalence Classes & $A\oplus B$ & $[A,B]$ & Dual Pair Representatives\\
\hline\\[-1.0em]
2x4 4-phase & 2 &n/a &2  &2\\
3x4 4-phase & 5 &  n/a& 1 &5\\
4x4 4-phase & 24 & 2 & 6 &22 \\
5x4 4-phase & 133 & n/a &  3  &94\\
6x4 4-phase & 1448 &2&27& 471\\
\hline\hline
\end{tabular}
\end{table}

\section{Acknowledgements}
    The authors would like to give special thanks to Gregory Coxson for his guidance and help over the months that this research has taken place.  We also thank Jon Russo for his wonderful search algorithm and Long (Winston) Cheong for his assistance in the programming.  
    
\section{Appendix}

\begin{itemize}
\item[I.]{$2\times 4$ Equivalence Class Representations:
{\small
\begin{align*}
1.& \hphantom{111} [[1, 1, 1, 1],[1, 1, -1, -1]]\\
2.& \hphantom{111} [[1, 1, 1, 1],[1, i, -1, -i]]
\end{align*}}
\noindent \textbf{NOTE}: Both classes can be represented by dual pairs.}

\item[II.]{$3\times 4$ Equivalence Class Representations:
{\small
\begin{align*}
1.& \hphantom{111} [[1, 1, 1, 1],[1, -1, -1, -1],[1, 1, -1, -1]]\\
2.& \hphantom{111} [[1, 1, 1, 1],[1, -1, -1, -1],[1, i, -1, -i]]\\
3.& \hphantom{111} [[1, 1, 1, 1],[1, i, -1, -i],[1, -1, 1, -1]]\\
4.& \hphantom{111} [[1, 1, 1, 1],[1, -1, -1, -i],[i, i, -i, -i]]\\
5.& \hphantom{111} [[1, 1, 1, 1],[1, i, i, -1],[-1, 1, 1, -1]]
\end{align*}}
\noindent \textbf{NOTE}: All classes can be represented by dual pairs.}

\item[III.]{$4\times 4$ Equivalence Class Representations:\\
\begin{minipage}[t]{0.5\textwidth}
{\small
\begin{align*}
1.& \hphantom{111} [[1, 1, 1, 1],[1, 1, 1, 1],[1, 1, -1, -1],[-1, -1, 1, 1]]\\
2.& \hphantom{111} [[1, 1, 1, 1],[1, 1, 1, i],[-1, -1, -i, 1],[1, -1, 1, -1]]\\
3.& \hphantom{111} [[1, 1, 1, 1],[1, 1, i, i],[1, -1, i, -i],[-1, 1, 1, -1]]\\
4.& \hphantom{111} [[1, 1, 1, 1],[1, 1, i, i],[i, -i, 1, -1],[1, -1, -1, 1]]\\
5.& \hphantom{111} [[1, 1, 1, 1],[1, 1, -1, -1],[1, -1, 1, -1],[1, -1, -1, 1]]\\
6.& \hphantom{111} [[1, 1, 1, 1],[1, i, i, -1],[-1, i, i, 1],[-1, 1, 1, -1]]\\
7.& \hphantom{111} [[1, 1, 1, 1],[1, 1, 1, 1],[1, i, -1, -i],[-1, -i, 1, i]]\\
8.& \hphantom{111} [[1, 1, 1, 1],[1, 1, i, i],[1, -1, 1, -1],[-1, 1, -i, i]]\\
9.& \hphantom{111} [[1, 1, 1, 1],[1, 1, i, i],[i, -1, 1, -i],[1, -1, -i, i]]\\
10.& \hphantom{111} [[1, 1, 1, 1],[1, 1, i, i],[-1, -1, 1, 1],[1, -1, i, -i]]\\
11.& \hphantom{111} [[1, 1, 1, 1],[1, 1, i, i],[i, -i, i, -i],[1, -1, -i, i]]\\
12.& \hphantom{111} [[1, 1, 1, 1],[1, 1, -1, -1],[1, -1, i, -i],[1, -1, -i, i]]
\end{align*}}
\end{minipage}%
\begin{minipage}[t]{0.5\textwidth}
{\small
\begin{align*}
13.& \hphantom{111} [[1, 1, 1, 1],[1, 1, i, -1],[1, -i, -1, 1],[i, -1, 1, -i]]\\
14.& \hphantom{111} [[1, 1, 1, 1],[1, 1, i, -1],[1, -i, -i, i],[i, -1, 1, -i]]\\
15.& \hphantom{111} [[1, 1, 1, 1],[1, 1, 1, i],[i, -1, -i, -i],[-i, i, 1, -1]]\\
16.& \hphantom{111} [[1, 1, 1, 1],[1, 1, 1, i],[-1, -1, -i, 1],[i, -i, 1, -1]]\\
17.& \hphantom{111} [[1, 1, 1, 1],[1, 1, i, -1],[i, -1, 1, -1],[1, -i, -1, i]]\\
18.& \hphantom{111} [[1, 1, 1, 1],[1, 1, i, i],[i, -i, i, -i],[-i, i, 1, -1]]\\
19.& \hphantom{111} [[1, 1, 1, 1],[1, 1, 1, 1],[i, i, -i, -i],[-i, -i, i, i]]\\
20.& \hphantom{111} [[1, 1, 1, 1],[1, 1, -1, -1],[i, -i, i, -i],[i, -i, -i, i]]\\
21.& \hphantom{111} [[1, 1, 1, 1],[1, i, i, -1],[-i, -1, -1, i],[-i, i, i, -i]]\\
22.& \hphantom{111} [[1, 1, 1, 1],[1, i, -1, -i],[-i, i, -i, i],[-i, -1, i, 1]]\\
23.& \hphantom{111} [[1, 1, 1, 1],[1, i, -1, -i],[-1, 1, -1, 1],[-1, i, 1, -i]]\\
24.& \hphantom{111} [[1, 1, 1, 1],[1, i, -1, -i],[1, -1, 1, -1],[1, -i, -1, i]]
\end{align*}}
\end{minipage}
\noindent \textbf{NOTE}:  All classes, except classes 23 and 24, can be represented by dual pairs.}

\item[IV.]{$5\times 4$ and $6\times 4$ Equivalence Class Representations: These representations can be downloaded from our website at the following address:
http://elvis.rowan.edu/datamining/ccm/equivalence/}

\end{itemize}



\end{document}